\theoremstyle{remark}
\newtheorem{definition}{Definition}
\newtheorem{theorem}{Theorem}
\newtheorem{corollary}{Corollary}
\newtheorem{remark}{Remark}
\newtheorem{example}{Example}
\newcommand{\identity}{\mathbbm{1}}
\begin{document}
\title%
{Quantum Secret Sharing Rates} 

\author{%
  \IEEEauthorblockN{Gabrielle Lalou}
  \IEEEauthorblockA{Department of Electrical Engineering \\
                    Télécom Paris, Institut Polytechnique de Paris\\
                    Palaiseau, France\\
                    Email: gabrielle.lalou@telecom-paris.fr}
  \and
  \IEEEauthorblockN{Husein Natur and Uzi Pereg}
  \IEEEauthorblockA{Helen Diller Quantum Center\\ 
                    Technion - Israel Institute of Technology\\
                    Haifa 3200003, Israel\\
                    Email: husein.natur@campus.technion.ac.il,\\ uzipereg@technion.ac.il}
}

\maketitle

\begin{abstract}
   This paper studies the capacity limits for quantum secret sharing (QSS). %
The goal of a QSS scheme is to distribute a quantum secret among multiple participants, such that only authorized parties can recover it through collaboration, while no information can be obtained without such collaboration.
   We introduce an information-theoretic model for the rate analysis of QSS
and its relation to compound quantum channels,
   following a similar approach as of
Zou \emph{et al.} (2015) on classical secret sharing. We establish a regularized characterization for the QSS capacity, and determine the capacity for QSS with dephasing noise.
\end{abstract}

\section{Introduction}

In modern communication networks, 
confidentiality %
is a fundamental requirement \cite{bloch2011physical}.
Quantum communication protocols address this need by leveraging physical laws to provide information-theoretic security.
Among these, QKD %
is the most technologically mature, though many other security primitives rely on 
quantum principles
\cite{zapatero2025implementation,graifer2023quantum,farre2025entanglement}.

The wiretap channel model, introduced by Wyner \cite{wyner1975wire}, and the secrecy capacity of its quantum analogue, have been studied extensively \cite{cai2004quantum, devetak2005private, tikku2020non}. The secrecy capacity quantifies %
how much information can be securely transmitted in the presence of an eavesdropper. %
Further %
network extensions have been studied as well, including semantic security \cite{boche2022semantic}, layered secrecy \cite{pereg2021key}, covert communication \cite{tahmasbi2020covert}, tactile communication \cite{hassanpour2025quantum}, and unreliable entanglement assistance \cite{lederman2024secure}.

Suppose a company must secure the password to its safe. 
To ensure  security, the company may distribute partial information  among $K$ colleagues, each receiving a share of the secret. %
The secret sharing model was originally introduced by Shamir~\cite{shamir1979share}  as a method for distributing a secret securely. %
In the $(t,K)$ threshold scheme, a secret is divided among $K$ participants in such a way that only a %
subset of $t$ or more participants can collaboratively reconstruct the secret. %

Quantum Secret Sharing (QSS) extends the principles of classical secret sharing to the quantum domain. It finds applications in several contexts, such as the creation of joint bank accounts containing quantum money \cite{wiesner1983conjugate, aaronson2009quantum},
and the secure execution of distributed quantum computations \cite{smith2000quantum}. 
The goal of QSS is to encode and distribute an arbitrary, unknown quantum state among several participants. Each participant holds a share. A \emph{qualified set} of participants is a subset capable of perfectly reconstructing the original secret through collaboration, whereas \emph{non-qualified} sets of participants 
gain no information about it. 
The no-cloning theorem, which asserts that an unknown quantum state cannot be perfectly copied, ensures that once a qualified set reconstructs the secret, it remains inaccessible to all remaining participants.
Furthermore, a $(t,K)$ threshold QSS (t-QSS)  scheme is only feasible for $2t>K$.
The collection of  qualified subsets is referred to as %
the \emph{access structure}. %
Several studies have examined QSS across different tasks \cite{%
gottesman2000theory,%
zhang2005multiparty,markham2008graph}. %
Among the earliest approaches,
Hillery et al. \cite{hillery1999quantum} (HBB99)
demonstrated that multipartite entanglement, specifically GHZ states, can serve as a viable quantum resource for sharing a classical secret.
Additional QSS and semi-QSS settings have also been explored, %
with classical messages \cite{%
guo2003quantum, xiao2004efficient, li2010semiquantum, Zhang_2011} and %
classical receivers  \cite{li2013quantum}. 

 Cleve et al. \cite{cleve1999share} (CGL99)  formulated a fully-quantum threshold scheme and established the first systematic framework for encoding and reconstructing quantum secrets, thereby providing a direct analogue to Shamir's classical scheme in the quantum domain.
 Smith \cite{smith2000quantum} further advanced the theory by extending linear secret-sharing schemes to the quantum setting, enabling constructions for arbitrary access structures.

In the CGL99 QSS protocol %
\cite{cleve1999share}, 
the secret is a single qutrit state,
$%
\ket{\phi} = \alpha \ket{0}  + \beta \ket{1}  + \gamma \ket{2}  %
$, %
encoded into three qutrit shares: %
$\ket{\phi} \mapsto \alpha (\ket{000} + \ket{111} + \ket{222}) + \beta (\ket{012} + \ket{120} + \ket{201}) + \gamma (\ket{021} + \ket{102} + \ket{210})
$ which %
are then distributed among Alice ($\text{Share}_1$), Bob ($\text{Share}_2$), and Charlie ($\text{Share}_3$).

This corresponds to a $(2,3)$ t-QSS scheme, in which two participants or more can perfectly reconstruct the original secret, while a single participant gains no information from their share. 
For instance, if Alice and Bob collaborate, they can apply CNOT operations to recover the secret:
\begin{enumerate}
    \item $\text{Share}_2 \leftarrow \text{Share}_1 + \text{Share}_2 \pmod{3}$,
    \item $\text{Share}_1 \leftarrow \text{Share}_2 + \text{Share}_1 \pmod{3}$.
\end{enumerate}
Upon these operations, the global state becomes
\begin{equation}
\ket{\phi} \otimes (\ket{00} + \ket{12} + \ket{21}).
\end{equation}
Hence, the secret $\ket{\phi}$ is recovered, %
while Charlie gains no information. %

Zou et al.~\cite{zou2015information} studied classical secret sharing over a noisy
broadcast channel, where the dealer encodes the secret and qualified subsets of
participants recover it from their received outputs. Their setting is modeled as an equivalent compound wiretap channel
\cite{liang2009compound,wyner1975wire}, where the dealer communicates a secret to
legitimate receivers while preventing eavesdroppers from gaining any knowledge of it.

 Here, we analyze a QSS model, where %
the dealer distributes quantum information to the participants via a quantum broadcast channel.
We then consider the transformation of this broadcast channel into an equivalent compound quantum channel defined by the collection of channels from the dealer to each qualified set of quantum systems. We observe that due to the no-cloning theorem, the recovery of the quantum secret by the qualified set implies secrecy with respect to the other participants. That is, secrecy is inherent. This abstraction enables us to leverage capacity results for quantum subspace transmission via compound quantum channels, given channel side information at the receiver (CSIR) %
\cite{bjelakovic2008quantum}. 
That is, our model assumes an informed decoder, whose decoding map may depend on the selected 
channel.
This aligns naturally with the structure of
QSS, since the decoding operation depends on the particular qualified subset of quantum receivers. %

The compound quantum channel representation is illustrated in  Fig.~\ref{fig:compound-channel}, for t-QSS, where $t=2$ and $K=3$. The collection of receivers is divided into legitimate receivers and eavesdroppers, corresponding to qualified and non-qualified sets of participants, respectively.
The family of quantum channels is thus associated with the access structure $\mathscr{A}$, where each member is a qualified set of participants. 

Our main capacity result is that the QSS capacity is characterized by the regularization of the following formula: 
\begin{align}
I_c(\mathscr{A})
\;\equiv\;
\max_{\ket{\phi}_{A'A}}
\;\min_{\ell \in \mathscr{A}}
\;
I(A' \rangle B_\ell)%
,
\end{align}%
where $I(A' \rangle B_\ell)$ denotes the coherent information between systems $A'$ and $B_\ell$.
Intuitively, $A'$ represents the dealer's entanglement share, while $B_\ell$  is the joint system of the $\ell$th qualified set.
Furthermore, we derive a closed-form formula for the capacity for t-QSS with dephasing noise.

While we follow a similar approach as in the classical framework of Zou \emph{et al.}~\cite{zou2015information}, its extension to the quantum domain introduces fundamental challenges. In contrast to the classical wiretap setting, secrecy here follows directly from the no-cloning theorem, and coherent information replaces mutual information as the relevant rate measure. The proof technique further departs from the classical setting by employing the teleportation protocol to establish quantum communication through %
entanglement generation. %

\begin{figure}[!tbp]
    \centering
    \includegraphics[width=0.9\linewidth]{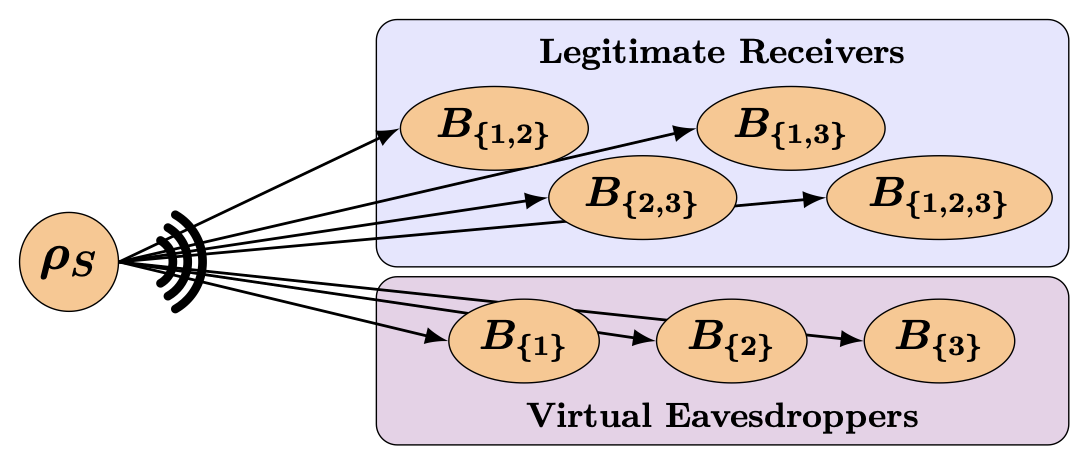}
    \caption{Equivalent compound quantum channel for a t-QSS $(2,3)$ scheme.}
    \label{fig:compound-channel}
\end{figure}

The paper is organized as follows. %
Section~\ref{Section:QSS} presents the formal definition of QSS and an explicit scheme by Smith \cite{smith2000quantum}. In Section~\ref{Section:CQC}, we introduce the compound channel model. In Section~\ref{Section:Main_results}, we state our QSS capacity result.
The analysis is given in Section~\ref{Section:Q_equals_E}, and
a summary in Section~\ref{Section:Conclusion}.

\paragraph*{Notation}
We use the following notation conventions. 
A quantum state $\rho$ is a density operator over a Hilbert space $\mathcal{H}$. We assume $\mathcal{H}$ is finite-dimensional.
Let $\mathcal{S}(\mathcal{H})$ denote the set of all such density operators. %
The  fidelity between two quantum states \( \rho \) and \( \sigma \) in \(\mathcal{S}(\mathcal{H}) \),  is defined by
$%
F(\rho, \sigma) := \norm{ 
\sqrt{\rho}\sqrt{\sigma}
}_1^2
$. 

 The von Neumann entropy is defined as
$%
H(\rho) := -\operatorname{tr}(\rho \log \rho)
$. %
When
associated with system $A$, we also denote %
$H(A)_\rho\equiv H(\rho_A)$.
For a bipartite state $\rho_{AB}$,
the coherent information  %
is then
$
I(A\rangle B)_\rho := H(B)_\rho-H(AB)_\rho %
$. 
\section{Quantum Secret Sharing}
\label{Section:QSS}

We begin with a general description of QSS.
A QSS scheme is a protocol where a sender has an unknown quantum state $\ket{\phi}$ to distribute among a set of participants, for example Alice, Bob, and Charlie. Each participant receives a share of the  output state, $\rho_{ABC}$.   
If a \emph{qualified set} of participants combines their shares together, they can recover the secret $\ket\phi$.  
For instance, in a $(2,3)$ t-QSS scheme, any subset of (at least) two participants can recover the secret. Mathematically, the scheme assigns a decoding map to each authorized collaboration, as shown in the table below:

\begin{table}[ht]
\caption{\small{$(2,3)$ t-QSS}}
\label{Table:2_3_t_QSS}
\centering
\resizebox{\columnwidth}{!}{
\begin{tabular}{|c|c|c|c|}
\hline
\textbf{Channel output} & \textbf{Collaboration} & \textbf{Decoding function} & \textbf{Decoding output} \\
\hline
& Alice and Bob & $\mathcal{D}_{AB} \otimes \identity(\rho_{ABC})$ & $\ketbra{\phi}_{AB} \otimes \rho_{C}$ \\
$\rho_{ABC}$  & Bob and Charlie & $\mathcal{D}_{BC} \otimes \identity(\rho_{ABC})$ & $\rho_{A}\otimes \ketbra{\phi}_{BC}$ \\
 & Alice and Charlie & $\mathcal{D}_{AC} \otimes \identity(\rho_{ABC})$ & $\ketbra{\phi}_{AC} \otimes \rho_{B}$ \\
\hline
\end{tabular}
}%
\end{table}
Ideally, the qualified participants should recover the secret state $\ket\phi$, while the remaining shares become decoupled. This decoupling ensures that the unqualified participants cannot extract information on the secret.

\subsection[Smith's Construction]{Smith's Construction}
We briefly describe the framework by Smith \cite{smith2000quantum}.
\subsubsection{Mathematical framework }
\label{Substion:Smith_Definitions}
Let $[K]=\{1,\ldots,K\}$ denote the set of participants. 

\begin{definition}%
    A \emph{qualified set} $T \subseteq [K]$ is a set of participants that may recover the secret when collaborating.
\end{definition}

\begin{definition}%
An adversary structure %
is a family 
$\mathscr{Z} \subseteq 2^{[K]}$
that is %
\begin{enumerate}[(i)]
    \item Downward-closed:
     $Z \in \mathscr{Z}$ and $Z' \subseteq Z$ implies $Z' \in \mathscr{Z}$.
    
    \item Self-dual:
    $%
        Z \in \mathscr{Z}
        \quad\Longleftrightarrow\quad
        Z^c \notin \mathscr{Z}
    $, for every subset $Z \subseteq [K]$, %
    where $Z^c := [K] \setminus Z$ is its complement. %
\end{enumerate}
The members of $\mathscr{Z}$ are called non-qualified sets.
\end{definition}

\begin{definition}[Access structure]
Given an adversary structure $\mathscr{Z}$, the access structure is defined as
$%
    \mathscr{A}
    := 2^{[K]} \setminus \mathscr{Z} %
$. %
The members of $\mathscr{A}$ are the qualified sets.
 \end{definition}

\begin{remark}[Complementary set]
\label{Remark:Complement_Non_Qualified}
The definition %
implies that the complement of every non-qualified set is qualified, i.e., if $T\in\mathscr{Z}$, then $ T^c \in \mathscr{A}$.
\end{remark}

The scheme consists of the following: %
\begin{enumerate}[(i)]
    \item A secret density operator $\rho_S\in \mathcal{S}(\mathcal{H}_S)$.

    \item An adversary and access structures, $\mathscr{Z}$ and $\mathscr{A}$. %

    \item The decoding operation for every qualified set $T \in \mathscr{A}$. %

\end{enumerate}

\textit{\underline{Algorithm’s construction :}}
The dealer holds a secret density operator $\rho_S$ on the Hilbert space
$\mathcal{H}_S = \mathrm{span}\{\ket{s} : s \in \mathbb{F}_q\}$, where $\mathbb{F}_q$ is a finite field.

\noindent\textbf{Encoding:} 
Denote the overall output space of participants by %
\begin{align}
  \mathcal{H}_{B_{[K]}}
    := \bigotimes_{k\in [K]} \mathcal{H}_{B_k},  
\end{align}
where each subsystem ${B_k}$ is assigned to the $k$th player.

It suffices to define 
the encoding on the basis vectors $\ket{s}$. %
The encoding isometry $F :\; \mathcal{H}_S \to \mathcal{H}_{{B_{[K]}}}$ is defined by $\ket{s} \longmapsto 
               \frac
               {\ket{f(s)}}
               {\norm{f(s)}}$, where \cite{smith2000quantum}
\begin{align}
               & %
               \ket{f(s)}
               :=
               \sum_{\mathbf{a} \in \mathbb{F}_q^{\,e-1}}
               \left|\, M 
                    \begin{pmatrix}
                        s \\[2pt]
                        \mathbf{a}
                    \end{pmatrix}
               \right\rangle 
\end{align}
over  $\mathbb{F}_q$,
where  $M$ is a $K \times e$ matrix, %
$    (\begin{matrix} s & \mathbf{a} \end{matrix})^T
$
is %
formed by concatenation. %
The encoded state is %
$
    \rho_{f(s)}
    :=
    F\rho_S F^{\dagger}$.

\noindent\textbf{Shares:} Player $k$ receives subsystem $B_{k}$. %

\medskip

\noindent\textbf{Decoding:}
Consider a given qualified set, $T \in \mathscr{A}$.  
The global Hilbert space decomposes as
\begin{align}
    \mathcal{H}_{B_{[K]}}
    \;=\;
    \mathcal{H}_{B_{ T}} \otimes \mathcal{H}_{B_{ T^c}}.
\end{align}
where $B_{ T }=\left( B_{k} \right)_{k\in T}$ and $B_{ T^c}=\left( B_{k} \right)_{k\in T^c}$.
Smith~\cite{smith2000quantum} showed that there exists 
a decoding channel 
$%
    \mathcal{D}^{{(T)}} :
    \mathcal{S}(\mathcal{H}_{B_{ T}})
    \to
    \mathcal{S}(\mathcal{H}_S \otimes \mathcal{H}_0)    
$ %
such that 
\begin{align}
    \rho_{S} \otimes \Gamma
    :=
    \bigl(\mathcal{D}^{(T)} \otimes \identity_{B_{ T^c}}\bigr)
    \rho_{f(s)}
\end{align}
for some uncorrelated 
$\Gamma \in \mathcal{S}(\mathcal{H}_0 \otimes \mathcal{H}_{B_{T^c}})$ where $\mathcal{H}_0$ is an auxiliary environment space. %

Thus, any qualified set %
of participants can perfectly reconstruct the secret,
while the others obtain no information.

\begin{remark}[Classical secret sharing]    In the classical construction, the dealer samples the vector 
\(\mathbf{a} \in \mathbb{F}_q^{\,e-1}\) uniformly at random before applying the
linear map \((s,\mathbf{a}) \mapsto M\binom{s}{\mathbf{a}}\). In particular, $M$ is a
 $K\times(t+1)$ Vandermonde matrix for a $(t,K)$ secret sharing scheme \cite{shamir1979share}.
The quantum encoding can be viewed as a coherent version.
\end{remark}

\subsection{QSS Broadcast Model}
\label{Section:QSS_Broadcast}
Suppose
that the dealer transmits the  participants' shares via a quantum broadcast channel.  
If the dealer transmits $A_1\cdots A_n$, %
then the $k$th participant  receives an output sequence $B_{k}^n={B_{k,1}\cdots B_{k,n}}$, for $k\in [K]$. %
Formally, a memoryless broadcast channel is represented by the CPTP map $ \mathcal{N}_{A\to B_1\cdots B_K}^{\otimes n}$, where
$%
    \mathcal{N}_{A\to B_1\cdots B_K} : \mathcal{L}(\mathcal{H}_A) \;\longrightarrow\; 
\mathcal{L}(\mathcal{H}_{B_{[K]}})
$, %
which maps
$\rho_{A}$ to $ \rho_{B_1\cdots B_K}$.
The dealer uses this quantum broadcast channel to distribute a secret 
$\rho_S \in \mathcal{S}( \mathcal{H}_S)$. %

Consider an access structure $\mathscr{A}$ of qualified sets.
In a QSS scheme, access to the secret is guaranteed only for authorized groups. Whenever the players forming a qualified set 
$T \in \mathscr{A}$
 jointly process their respective quantum systems, the original secret state can be faithfully retrieved.
Conversely, %
any set $T \notin \mathscr{A}$ %
cannot recover any information about the secret.

\begin{remark}[No Cloning]
Consider a $(K-1, K)$ t-QSS scheme, where any subset of participants of size $K-1$ or more forms a qualified set. 
Reliability implies that the secret can be recovered from %
$(B_2^n, \ldots, B_K^n)$. 
The remaining participant receives $B_1^n$, and by the no-cloning theorem, they cannot possess any information about the secret. %
This ensures secrecy against all individual eavesdroppers. 
More generally, because the complement of every non-qualified set is
necessarily qualified in our model (see Remark~\ref{Remark:Complement_Non_Qualified}), the same %
argument ensures
secrecy against all unauthorized subsets.

\end{remark}

\begin{definition}[Code for a QSS Broadcast Channel]
\label{Definition:Quantum_Code_QSS}
A \((2^{nR},n)\) code for QSS over \(\mathcal{N}_{A\to B_1\cdots B_K}\) consists of:
\begin{itemize}
\item a secret space $\mathcal{H}_S$ of dimension $2^{nR}$ (for integer $2^{nR}$),

\item an encoding channel $\mathcal{F}_{S\to A^n}:\mathcal{L}(\mathcal{H}_S)\to \mathcal{L}(\mathcal{H}_A^{\otimes n})$, and

\item  a family of decoding channels $\{\mathcal{D}^{{(T)}}_{B_T^n\to \widehat{S}}\}_{T\in\mathscr{A}}$, where \(\mathcal{D}^{{(T)}}_{B_T^n\to \widehat{S}} : \mathcal{L}( \mathcal{H}_{B_T}^{\otimes n}) \rightarrow \mathcal{L}(\mathcal{H}_S)\) produces an estimate of  
 the secret from the  collection $B_T$,  for all %
 $T\in\mathscr{A}$.
\end{itemize}

\end{definition}

The scheme works as follows. 
The dealer holds a ``secret system" $S$ in a state $\rho_S \in \mathcal{S}(\mathcal{H}_S)$,
where $S$ consists of $nR$ qubits. %
Let $\ket{\phi}_{S'S} \in \mathcal{H}_{S'} \otimes \mathcal{H}_S$ be a purification of~$\rho_S$.
Now, the encoder applies the encoding channel: 
\begin{align}
\rho_{S' A^n}= (\identity_{S'}\otimes \mathcal{F}_{S\to A^n})(\ketbra{\phi}_{S'S}).
\end{align}
and sends $A^n$
through $n$ uses of a quantum broadcast channel 
$\mathcal{N}_{A\to B_1\cdots B_K}$, 
which yields
\begin{align}
\rho_{S' B_{[K]}^n %
}=\left(\identity_{S'}\otimes (\mathcal{N}_{A\to B_1 \cdots B_K})^{\otimes n} \right) (\rho_{S' A^n }).
\end{align}

Consider a qualified set of participants $T \subseteq [K]$. The overall output 
$B_{[K]}^n$ %
can then be decomposed to
$(B_T^n,B_{T^c}^n)$. 
The qualified set of participants receives $\rho_{B_T^n}$ and applies the corresponding decoding channel, $\mathcal{D}^{{(T)}}$:
\begin{align}
\rho_{S'\widehat{S} B_{T^c}^n}=\left(\identity_{S'}\otimes \mathcal{D}^{{(T)}}_{B_T^n\to \widehat{S}}\otimes\identity_{T^c} \right)(\rho_{S' B_T^n B_{T^c}^n }).
\end{align}

\begin{definition}[Achievable QSS Rate]
A nonnegative number \(R\) is called an achievable QSS rate for %
an access structure \(\mathscr{A}\) if,
for every \(\varepsilon,\delta>0\) and
\(n \ge n_0(\varepsilon,\delta,R)\),
there exists a \((2^{n(R-\delta)},n)\) code such that
\begin{align}
F\!\left(\ketbra{\phi}_{S'S},\, \rho_{S'\widehat{S}}\right)
\ge 1-\varepsilon
\end{align}
where $\rho_{S'\widehat{S} }$ is the reduced state of $\rho_{S'\widehat{S} B_{T^c}^n}$.
\end{definition}
The definition above also guarantees secrecy, as explained in Subsection~\ref{Subsection:Secrecy} below.

\begin{definition}[QSS Capacity]
The QSS capacity is defined as the supremum of achievable rates $R$ for  %
an access structure \(\mathscr{A}\).
We denote the QSS capacity by $C_{\text{QSS}}(\mathscr{A})$.
\end{definition}

\subsection{Secrecy}
\label{Subsection:Secrecy}
To see that the QSS protocol above ensures secrecy, observe that by 
Uhlmann's theorem, for any purification
$\ket{\rho_{S'\widehat{S} B_{T^c} G}}$ of the output state 
$\rho_{S'\widehat{S} B_{T^c}}$, we have 

\begin{align}
F\!\left(\ketbra{\phi}_{S'S}\otimes\ketbra{\theta}_{B_{T^c} G} ,\, \ketbra{\rho_{S'\widehat{S} B_{T^c} G}}\right) \label{eq:mon_equation} \\
\ge 1-\varepsilon \nonumber
\end{align}
for some $\ketbra{\theta}_{B_{T^c} G}.$ %
By the Fuchs–van de Graaf inequalities \cite[Cor.~9.3.2]{wilde2011classical},
\begin{align}
\norm{\ketbra{\phi}_{S'S}\otimes\ketbra{\theta}_{B_{T^c} G}
-
\ketbra{\rho_{S'\widehat{S} B_{T^c} G}}
}_1 \label{Equation:Error_distance}
\\\leq 2\sqrt{\varepsilon} \nonumber
\end{align}

By the trace-distance monotonicity \cite[Cor.~9.1.2]{wilde2011classical},
\begin{align}
\norm{
\ketbra{\theta}_{B_{T^c} G}
-
\rho_{B_{T^c} G}
}_1\leq 2\sqrt{\varepsilon}
\label{Equation:theta_rho_distance}
\end{align}

Next, by the triangle inequality,
\begin{align}
&\norm{\ketbra{\phi}_{S'S}\otimes \rho_{B_{T^c} G} - \ketbra{\rho_{S'\widehat{S} B_{T^c} G}}}_1 \label{eq:ma_preuve} \\
&\leq \Big\lVert\ketbra{\phi}_{S'S}\otimes \rho_{B_{T^c} G} -\ketbra{\phi}_{S'S}\otimes\ketbra{\theta}_{B_{T^c} G} \Big\rVert_1 \nonumber \\
&\phantom{=} +\Big\lVert \ketbra{\phi}_{S'S}\otimes\ketbra{\theta}_{B_{T^c} G} - \ketbra{\rho_{S'\widehat{S} B_{T^c} G}} \Big\rVert_1 \nonumber \\
&= \Big\lVert \rho_{B_{T^c} G}- \ketbra{\theta}_{B_{T^c} G} \Big\rVert_1 \nonumber +\Big\lVert \ketbra{\phi}_{S'S}\otimes\ketbra{\theta}_{B_{T^c} G} 
\\&\phantom{=} - \ketbra{\rho_{S'\widehat{S} B_{T^c} G}} \Big\rVert_1 \nonumber \\
&\leq 4\sqrt{\varepsilon} \nonumber
\end{align}
where the last inequality holds by \eqref{Equation:Error_distance} and \eqref{Equation:theta_rho_distance}.
Applying trace-distance monotonicity once more, we obtain a bound for the indistinguishability with respect to the product state  $\ketbra{\phi}_{S'S}\otimes \rho_{B_{T^c} }$:
\begin{align}
\frac{1}{2}\norm{\ketbra{\phi}_{S'S}\otimes \rho_{B_{T^c} }
-
\rho_{S'\widehat{S} B_{T^c} }
}_1
&\leq
2\sqrt{\varepsilon}
\end{align}
Then, by the Fuchs-van de Graaf inequalities, 
\begin{align}
F\!\left(\ketbra{\phi}_{S'S} \otimes \rho_{B_{T^c} } ,\, \rho_{S'\widehat{S} B_{T^c} }\right)
&\ge 
1-4\sqrt{\varepsilon}
\end{align}
which guarantees secrecy {\cite[Cor.~9.3.1]{wilde2011classical}}.

\begin{remark}[Entanglement with the dealer]
One may view the coding scheme above as a procedure for generating entanglement between the dealer and a qualified set of participants. Specifically, suppose that the dealer first prepares a bipartite entangled state $\ket{\phi}_{S'S}$ locally, where $S$ contains the secret and $S'$ is a resource that the dealer keeps. The dealer then encodes $S$ into $A^n$ and transmits it through the broadcast channel $\mathcal{N}_{A \to B_1 \cdots B_K}^{\otimes n}$. When a qualified set of participants, $T$, wishes to recover the secret, its members cooperate and apply the decoding map $\mathcal{D}^{(T)}_{B_T^n \to \tilde{S}}$. At the end of the protocol, the dealer and the qualified set approximately share the bipartite state $\phi_{S'S}$.
\end{remark}

\begin{remark}[Entanglement monogamy]
According to the monogamy of entanglement, if two systems are maximally entangled, then they cannot be entangled with any third system \cite{terhal2004entanglement} \cite[Sec. 4.5]{vidick2023introduction}. Equivalently, they must be in a product state with the rest of the universe. In our setting, if the dealer and a qualified set share an approximately pure entangled state, then they cannot simultaneously share entanglement with any other parties. Intuitively, this implies that any unqualified set of participants is effectively decoupled from both the dealer's reference system and the secret.

\end{remark}

\section{Coding Definitions}
\label{Section:CQC}

Our model relies on the notion of a \textit{compound quantum channel}, which describes communication over an unknown channel, taken from a family of possible channels. We extend and adapt the results and definitions from Bjelaković \textit{et al.}~\cite{bjelakovic2008quantum} to the context of a QSS  scheme. 

In this section, we provide the definitions for the quantum capacity  and the entanglement-generation capacity with an informed decoder (see Remark~\ref{Remark:CSIR} below). In the following section, we prove that these two quantities are equal, which leads us to our main theorem.

A compound quantum channel with informed decoder is defined by a family of memoryless quantum channels,
\begin{align}
    \left\{ \mathcal{N}^{(\ell)}_{A \to B_\ell} : \mathcal{L}(\mathcal{H}_A) \rightarrow \mathcal{L}(\mathcal{H}_{B_\ell}) \right\}_{\ell \in \mathcal{J}}
\end{align}
where we allow each $B_\ell$ to be of a different dimension.
The channels are indexed by $\ell$ from a finite index set $\mathcal{J}$.

The sender does not know which channel from this set is actually used for transmission,
whereas the decoder is informed. For our purposes, it suffices to consider a finite family, i.e.,  $\abs{\mathcal{J}}<\infty$. One may then assume without loss of generality that the output dimension is identical for all $\ell$, 
and set it as the maximal output dimension.

\subsection{Quantum communications}

\begin{definition}[Compound Quantum Code]
\label{Definition:Quantum_Code}
    A \((2^{nR},n)\) code for a compound channel with an informed decoder consists of:
\begin{itemize}
\item a quantum message space $\mathcal{H}_S$ of dimension $2^{nR}$, %

\item an encoding channel $\mathcal{F}_{S\to A^n}:\mathcal{L}(\mathcal{H}_S)\to \mathcal{L}(
\mathcal{H}_A^{\otimes n} )$
, and
\item  a family of decoding channels $\{\mathcal{D}^{{(\ell)}}_{B^n\to \widehat{S}}\}_{\ell \in \mathcal{J}}$, where \(\mathcal{D}^{{(\ell)}}_{B^n\to \widehat{S}}: \mathcal{L}(\mathcal{H}_B^{\otimes n}) \rightarrow \mathcal{L}(\mathcal{H}_S)\) represents the recovery operation for the corresponding  channel \(\mathcal{N}^{(\ell)}_{A \to B_\ell}\).

\end{itemize}
\end{definition}
The scheme works as follows. 
Alice has a "message" system $S$, in a state 
$\rho_S$. Let $\ket{\phi}_{S'S}$ be a purification of this state. 
She applies the encoding channel, producing the input state
\begin{align}
\rho_{S' A^n}= (\identity_{S'}\otimes \mathcal{F}_{S\to A^n})(\ketbra{\phi}_{S'S})
\end{align}
and sends $A^n$ through $n$ channel uses. Suppose that $\mathcal{N}^{(\ell)}_{A \to B_\ell}$ is the \emph{actual} channel. 
Then, the output state is
\begin{align}
\rho_{S' B_\ell^n }^{(\ell)}=\left(\identity_{S'}\otimes (\mathcal{N}^{(\ell)}_{A\to B_\ell})^{\otimes n}\right)(\rho_{S'A^n}). 
\end{align}
The decoder receives $B_\ell^n$. Recall that we assume an informed decoder, i.e., he knows the channel. 
Hence, he applies the corresponding decoding channel 
\begin{align}
\rho_{S' \widehat{S}}^{(\ell)}=(\identity_{S'}\otimes \mathcal{D}^{{(\ell)}}_{B^n\to \widehat{S}})(\rho_{S' B_\ell^n }^{(\ell)}).
\end{align}

\begin{definition}[Achievable Quantum Rate]
A nonnegative number \(R\) is called an achievable quantum rate for the compound quantum channel with an informed decoder if
for every $\varepsilon,\delta>0$ and 
$n\geq n_0(\varepsilon,\delta,R)$,
there exists a  \((2^{n(R-\delta)},n)\) code such that:
\begin{align}
\min_{\ell\in\mathcal{J}} F\left(\ketbra{\phi}_{S'S} \,,\; \rho_{S' \widehat{S}}^{(\ell)}\right)\geq 1-\varepsilon
\end{align}
for all $\ket{\phi}_{S'S}\in \mathcal{H}_{S'}\otimes \mathcal{H}_S$. %
\end{definition}

\begin{definition}[Capacity with Informed Decoder]
The quantum capacity \(Q(\mathcal{J})\)  is  the supremum of achievable rates $R$ for the  compound quantum channel with an informed decoder.
\end{definition}

\begin{remark}[Side information]
\label{Remark:CSIR}
In models with channel uncertainty, the availability of side information plays a crucial role \cite{Caire1999Capacity}. Various works assume that such information is available at the encoder, the decoder, or both \cite{cai2004quantum}.
In this work, we assume that the decoder is informed, i.e., only the decoder knows which channel from the family $\{\mathcal{N}^{(\ell)}\}_{\ell \in \mathcal{J}}$ is realized. This assumption aligns naturally with the QSS setting, where the identity of the qualified set is known to the participating parties during the recovery phase.
\end{remark}

\subsection{Entanglement Generation}
One may also consider %
entanglement generation.  The code is defined in the same manner as in Definition~\ref{Definition:Quantum_Code}. However, we focus on a maximally entangled state $\ket{\Phi}\in\mathcal{H}_S^{\otimes 2}$,
\begin{align}
\ket{\Phi}\equiv \frac{1}{\sqrt{d_S}}
\sum_{i=0}^{d_S-1}\ket{i}_{S'}\otimes \ket{i}_S
\end{align}
where $d_S\equiv 2^{nR}$ is the entanglement %
dimension. 
Here, $S'$ is the entanglement resource that Alice keeps, and $S$ is the resource that she distributes to Bob.

\begin{definition}[Achievable Entanglement-Generation Rate]
An achievable entanglement-generation rate \(R_{\text{EG}}>0\)
for the compound quantum channel with an informed decoder 
is such that
for every $\varepsilon,\delta>0$ and 
$n\geq n_0(\varepsilon,\delta,R)$,
there exists a  \((2^{n(R_{\text{EG}}-\delta)},n)\) code such that:
\begin{align}
\min_{\ell} F(\ketbra{\Phi}_{S'S},\rho_{S' \widehat{S}}^{(\ell)})\geq 1-\varepsilon.
\end{align}
\end{definition}

The entanglement-generation capacity $E(\mathcal{J})$ is defined accordingly.

\begin{remark}
\label{Remark:Entanglement_Generation}
A quantum communication code can also perform entanglement generation. Hence, if a rate $R$ is achievable for quantum communication, then %
$R_{\text{EG}}=R$ is achievable as well.
It thus follows that $E(\mathcal{J}) \geq Q(\mathcal{J}).$
Later, we will show that the capacities are in fact identical. 
\end{remark}

The following Theorem is  due to Bjelakovi\'c et al. \cite{bjelakovic2008quantum}.
Define
\begin{align}
I_c(\mathcal{J})
\;\equiv\;
\max_{\ket{\phi}_{A'A}}
\;\min_{\ell \in \{1,\cdots,|\mathcal{J}|\}}
\;
I(A' \rangle B_\ell)_{\sigma^{(\ell)}},
\end{align}
where $ \sigma^{(\ell)}_{A' B_\ell} = (\identity_{A'} \otimes \mathcal{N}^{(\ell)}_{A \to B_\ell})(\phi_{A'A}).$
\begin{theorem}[See {\cite[Lemm. 4.3 and Th. 4.3]{bjelakovic2008quantum}}]
\label{Theorem:Bjelakovic}
Let \(   \{ \mathcal{N}^{(\ell)}\}_{\ell\in \mathcal{J}}\) be a compound channel. 
Then,
\begin{align}
E(\mathcal{J}) &\;\ge\; I_c(\mathcal{J}).
\label{eq:entanglement-ineg}
\end{align}
Furthermore, 
\begin{align}
E(\mathcal{J}) &\;=\; \lim_{n\to\infty} \frac{1}{n} I_c(\mathcal{J}^{\otimes n}).
\label{eq:entanglement-limit}
\end{align}
\end{theorem}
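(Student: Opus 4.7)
The plan is to establish the achievability bound $E(\mathcal{J}) \geq I_c(\mathcal{J})$ via a random-coding plus decoupling argument that produces a single encoder usable for every $\ell \in \mathcal{J}$, combined with an $\ell$-dependent decoder permitted by the informed-decoder assumption; the regularized identity \eqref{eq:entanglement-limit} then follows by applying the bound to the $n$-fold compound $\{(\mathcal{N}^{(\ell)})^{\otimes n}\}_{\ell\in\mathcal{J}}$ and closing the gap with a standard converse.

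For the direct part, I would fix a purification $\ket{\phi}_{A'A}$ attaining the inner max of $I_c(\mathcal{J})$, tensorize to $\ket{\phi}^{\otimes n}_{A'^n A^n}$, and purify each $\mathcal{N}^{(\ell)}$ by an isometry $V^{(\ell)}_{A\to B_\ell E_\ell}$ to a pure tripartite state $\sigma^{(\ell)}_{A' B_\ell E_\ell}$. The analytic engine is a decoupling lemma: injecting a code subspace of dimension $2^{nR}$ into $A^{\otimes n}$ through a Haar-random isometry, with $R < I(A'\rangle B_\ell)_{\sigma^{(\ell)}} - \delta$, yields with high probability a state on the reference and environment that is trace-distance close to a product. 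Uhlmann's theorem then guarantees the existence of a recovery channel $\mathcal{D}^{(\ell)}$ restoring the maximally entangled state with fidelity $\geq 1-\varepsilon'$.

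Since $\mathcal{J}$ is finite, a union bound over $\ell$ promotes the probabilistic decoupling to the statement that a single random encoder decouples the reference from every $E_\ell$ simultaneously; the informed decoder then applies the corresponding $\mathcal{D}^{(\ell)}$, giving
\begin{align}
\min_{\ell\in\mathcal{J}} F\!\left(\ketbra{\Phi}_{S'S},\, \rho^{(\ell)}_{S'\widehat{S}}\right) \geq 1-\varepsilon.
\end{align}
Applying this argument to $(\mathcal{N}^{(\ell)})^{\otimes n}$ in place of $\mathcal{N}^{(\ell)}$ yields $E(\mathcal{J}) \geq \frac{1}{n} I_c(\mathcal{J}^{\otimes n})$ for every $n$, hence $E(\mathcal{J})$ dominates the limit. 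For the matching converse, I would take any $(2^{nR},n)$ code of fidelity at least $1-\varepsilon$, and for each $\ell$ combine the Alicki--Fannes continuity of entropy with data processing for coherent information to obtain
\begin{align}
nR \leq I(S'\rangle \widehat{S})_{\rho^{(\ell)}} + n\delta(\varepsilon) \leq I_c(\mathcal{J}^{\otimes n}) + n\delta(\varepsilon),
\end{align}
where $\delta(\varepsilon)\to 0$ as $\varepsilon\to 0$. Taking the minimum over $\ell$, normalizing by $n$, and sending $\varepsilon\to 0$ then $n\to\infty$ completes the identity.

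The main obstacle is the direct step: one must construct a \emph{single} encoder that is simultaneously good against every member of $\mathcal{J}$, not a channel-dependent one. Finiteness of $\mathcal{J}$ together with the exponential concentration of Haar-random codes makes the union-bound route effective here; for infinite $\mathcal{J}$, one would need a Bjelakovi\'c--Boche--N\"otzel style $\tau$-net discretization of the channel family, which we avoid under the standing assumption that $\mathcal{J}$ is finite. A second subtlety is that the decoupling lemma must be applied in a way that preserves the tensor-product reference structure $\ket{\phi}^{\otimes n}$, so that the inner optimization in $I_c(\mathcal{J})$ indeed controls the rate for all $\ell$ with the same input.
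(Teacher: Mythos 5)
The paper does not prove this theorem at all: it is imported verbatim from Bjelakovi\'c \emph{et al.}\ (the citation to their Lemma~4.3 and Theorem~4.3 is the entire justification), so there is no in-paper argument to compare against. Your sketch is a self-contained reconstruction along the standard lines that the cited reference itself follows for a finite family with an informed decoder: a single Haar-random code subspace at rate below $\min_\ell I(A'\rangle B_\ell)-\delta$, decoupling of the reference from each environment $E_\ell$, concentration of measure plus a union bound over the finitely many $\ell$ to get one encoder that works simultaneously, Uhlmann to extract the $\ell$-dependent recovery map, and an Alicki--Fannes/data-processing converse. This is the right architecture, and the two subtleties you flag (a single universal encoder; finiteness of $\mathcal{J}$ versus a $\tau$-net) are exactly the ones that matter.

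One place where your write-up is loose: in the converse, the displayed chain
\begin{align}
nR \;\le\; I(S'\rangle \widehat{S})_{\rho^{(\ell)}} + n\delta(\varepsilon) \;\le\; I_c(\mathcal{J}^{\otimes n}) + n\delta(\varepsilon)
\end{align}
is not valid for a fixed $\ell$ as written, since $I(S'\rangle B_\ell^n)_\rho \le \max_{\phi} I(A'\rangle B_\ell^n)$ does not imply a bound by $\max_{\phi}\min_{\ell'} I(A'\rangle B_{\ell'}^n)$. The correct order is to first deduce $nR - n\delta(\varepsilon) \le I(S'\rangle B_\ell^n)_\rho$ for \emph{every} $\ell$ with the \emph{same} input state $\rho$ (the encoder cannot depend on $\ell$), then take $\min_\ell$, and only then relax to the $\max$ over inputs to obtain $I_c(\mathcal{J}^{\otimes n})$. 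You say ``taking the minimum over $\ell$'' afterwards, so the idea is there, but the display should be reordered. You should also note that the limit in \eqref{eq:entanglement-limit} exists by superadditivity of $n\mapsto I_c(\mathcal{J}^{\otimes n})$. With those repairs the sketch is a correct, and more informative, substitute for the paper's bare citation.
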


\section{Main Results}
We now present our main results on the QSS capacity. 
We begin with the quantum capacity of the compound channel with an informed decoder. 
\label{Section:Main_results}
\begin{theorem}
\label{Theorem:Q_equals_E}
Let \(   \{ \mathcal{N}^{(\ell)}\}_{\ell\in \mathcal{J}}\) be a compound channel. 
Then,
\begin{align}
Q(\mathcal{J})=E(\mathcal{J})
\end{align}
\end{theorem}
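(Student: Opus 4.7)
Since Remark~\ref{Remark:Entanglement_Generation} already gives $E(\mathcal{J}) \ge Q(\mathcal{J})$, my plan is to establish the reverse inequality $Q(\mathcal{J}) \ge E(\mathcal{J})$. The approach is the standard random-subspace (expurgation) argument, adapted from the single-channel case of Barnum--Knill--Nielsen and Devetak to the compound setting: an entanglement-generation code, whose effective channel has high entanglement fidelity, is converted into a subspace-transmission code by restricting the input to a Haar-random subspace of only slightly smaller dimension.

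Concretely, fix $R < E(\mathcal{J})$ and let $\varepsilon,\delta > 0$. By achievability for entanglement generation, for all sufficiently large $n$ there exists a $(2^{n(R-\delta)},n)$ code $(\mathcal{F}_{S \to A^n},\{\mathcal{D}^{(\ell)}_{B^n \to \widehat{S}}\}_{\ell\in\mathcal{J}})$ with $\min_{\ell}\, F(\ketbra{\Phi}_{S'S},\, \rho^{(\ell)}_{S'\widehat{S}}) \ge 1-\varepsilon$. Writing $d_S := 2^{n(R-\delta)}$ and defining the effective channel $\mathcal{T}^{(\ell)} := \mathcal{D}^{(\ell)} \circ (\mathcal{N}^{(\ell)})^{\otimes n} \circ \mathcal{F}$ on $\mathcal{L}(\mathcal{H}_S)$, this condition is exactly $F_e(\mathcal{T}^{(\ell)}) \ge 1-\varepsilon$ for every $\ell \in \mathcal{J}$. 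Via the standard identity $\bar{F}(\mathcal{T}^{(\ell)}) = (d_S F_e(\mathcal{T}^{(\ell)}) + 1)/(d_S+1)$ between entanglement fidelity and Haar-averaged pure-state input fidelity, this transfers to $\bar{F}(\mathcal{T}^{(\ell)}) \ge 1-\varepsilon$ for every $\ell$.

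The critical step is to upgrade this average-case bound to a uniform bound on a subspace. I would draw a Haar-random isometry $V : \mathcal{H}'_S \hookrightarrow \mathcal{H}_S$ with $\dim \mathcal{H}'_S = d'_S := \lfloor 2^{n(R-\delta-\alpha)} \rfloor$ for arbitrarily small $\alpha > 0$. For every fixed $\ket{\phi} \in \mathcal{H}'_S$, concentration of measure for Lipschitz functions on the complex unit sphere (Levy's lemma) tells us that $\bra{\phi} \mathcal{T}^{(\ell)}(\ketbra{\phi}) \ket{\phi}$ concentrates sharply around its Haar average $\bar{F}(\mathcal{T}^{(\ell)}) \ge 1-\varepsilon$. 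Combining this with an $\eta$-net covering of pure states in $\mathcal{H}'_S$, whose size is $\exp(O(d'_S))$, and a union bound over the finite set $\mathcal{J}$, a Haar-random $V$ satisfies, with probability approaching $1$,
\begin{align}
\bra{\phi} \mathcal{T}^{(\ell)}(\ketbra{\phi}) \ket{\phi} \;\ge\; 1-\varepsilon - O\!\left(\sqrt{d'_S/d_S}\right)
\end{align}
uniformly in $\ket{\phi} \in \mathcal{H}'_S$ and $\ell \in \mathcal{J}$; since $d'_S/d_S = 2^{-n\alpha} \to 0$, the right-hand side tends to $1-\varepsilon$. Fixing such a $V$ and replacing the encoder $\mathcal{F}$ by $\mathcal{F} \circ V$ yields a $(d'_S,n)$ quantum-communication code with worst-case fidelity at least $1-2\varepsilon$ for every channel in $\mathcal{J}$. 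Sending $n \to \infty$ and $\alpha,\delta,\varepsilon \downarrow 0$ gives $Q(\mathcal{J}) \ge R$, and since $R < E(\mathcal{J})$ was arbitrary, $Q(\mathcal{J}) \ge E(\mathcal{J})$.

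The main technical obstacle is the random-subspace concentration step: ensuring that a single subspace works simultaneously for all effective channels $\mathcal{T}^{(\ell)}$. In our setting this is rather mild because $\mathcal{J}$ is finite (as noted in Section~\ref{Section:CQC}, one may assume so without loss of generality), so the union bound over $\ell$ adds only a multiplicative factor $|\mathcal{J}|$ to the failure probability and does not affect the exponential concentration estimate; the compound structure is thus absorbed entirely by the minimum over $\ell$ in the achievability condition, and the single-channel argument goes through with essentially no modification.
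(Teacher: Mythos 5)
Your proof is correct in outline but follows a genuinely different route from the paper. The paper converts entanglement generation into quantum communication via the teleportation protocol, which a priori requires forward classical communication of the Bell-measurement outcome $m$; it then removes this assistance by a pigeonhole/derandomization argument, showing that some fixed outcome $m^*$ already yields fidelity at least $1-\abs{\mathcal{J}}\varepsilon_n$ simultaneously for every $\ell$ (finiteness of $\mathcal{J}$ enters there through the factor $\abs{\mathcal{J}}$, just as it enters your union bound). You instead work directly with the effective channels $\mathcal{T}^{(\ell)}$ and run a Barnum--Knill--Nielsen-type expurgation: entanglement fidelity, to Haar-averaged fidelity, to a uniform fidelity bound on a random subspace of negligibly smaller rate, with a union bound over the finite $\mathcal{J}$. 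Both arguments are sound and both lose only a constant factor $\abs{\mathcal{J}}$; yours avoids the teleportation detour and the instrument/rescaling bookkeeping of the paper's proof, at the cost of heavier measure-concentration machinery (Levy's lemma plus an $\eta$-net), while the paper's argument is more elementary.

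One step you should make explicit: the achievability criterion for quantum communication in this paper demands $F(\ketbra{\phi}_{S'S},\rho^{(\ell)}_{S'\widehat{S}})\ge 1-\varepsilon$ for \emph{all} bipartite pure states $\ket{\phi_{S'S}}$, i.e.\ preservation of entanglement with an external reference, whereas your concentration argument only delivers a uniform lower bound on the input--output fidelity $\bra{\phi}\mathcal{T}^{(\ell)}(\ketbra{\phi})\ket{\phi}$ for unentangled pure inputs $\ket{\phi}\in\mathcal{H}'_S$. To close this you need the standard Barnum--Knill--Nielsen lemma that a minimum pure-state fidelity of $1-\eta$ on a subspace implies entanglement fidelity at least $1-\frac{3}{2}\eta$ for every state supported on that subspace, and hence for every purification of such a state. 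This is a known, purely channel-theoretic fact and costs nothing asymptotically, but without it your final claim of worst-case fidelity at least $1-2\varepsilon$ does not yet match the definition being targeted.
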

See proof %
in Section~\ref{Section:Q_equals_E}.
Our main result is given below, as a consequence of 
Theorem~\ref{Theorem:Bjelakovic} (from \cite{bjelakovic2008quantum})
and
Theorem~\ref{Theorem:Q_equals_E}. %
Next, we follow a similar approach to that in \cite{zou2015information, liang2009compound}.

We construct an \emph{equivalent compound channel} associated with the access structure.
\begin{definition}[QSS Compoud Channel]
\label{Definition:Equivalence}
Consider a QSS scheme over a quantum broadcast channel $\mathcal{N}_{A\to B_1,\ldots,B_K}$ (see Subsection~\ref{Section:QSS_Broadcast}). 
Given an access structure $\mathscr{A}$, we associate to each qualified set 
$T \in \mathscr{A}$ a virtual legitimate receiver 
$B_T := (B_k)_{k \in T}$.  
This defines a compound quantum channel 
\begin{align}
\left\{\mathcal{N}_{A\to \overline{B}_\ell}^{(\ell)}\right\}_{\ell \in \mathcal{J}}
\end{align}
with one virtual
receiver $\ell$ per qualified set $T$, where $\overline{B}_\ell \equiv B_T$. %
Hence, $\mathcal{J}\cong\mathscr{A}$.
\end{definition}

\begin{remark}[Compound channel for $(2,3)$ t-QSS]
\label{Remark:2_3_QSS}
Consider for example the $(2,3)$ t-QSS scheme, where any subset of at least two participants can recover the secret (see Table~\ref{Table:2_3_t_QSS}).
Here, the dealer sends $A=(A_1,A_2 ,A_3)$ to the players through a quantum communication channel, 
and the players, Alice, Bob, and Charlie receive  $B_1$, $B_2$, and $B_3$, respectively.
The access structure is thus associated with a compound quantum channel $\{ \mathcal{N}^{(\ell)}_{A\to \overline{B}_\ell} \}_{\ell\in\mathcal{J}}$, with an index set 
\begin{align*}
\mathcal{J}\cong\mathscr{A}=\big\{\{\text{Alice, Bob}\},\{\text{Alice, Bob}\},\{\text{Alice, Charlie}\}, \\\{\text{Bob, Charlie}\},\{\text{Alice, Bob, Charlie}\}\big\}.
\end{align*}
 We may define $\mathcal{J}=\{1,2,3,4\}$, and consider the associated family of four channels:
\begin{align*}
\{  \mathcal{N}^{(\ell=1)}_{A\to B_1 B_2} \,,\; \mathcal{N}^{(\ell=2)}_{A\to B_1 B_3} 
\,,\; \mathcal{N}^{(\ell=3)}_{A\to B_2 B_3} \,,\; \mathcal{N}^{(\ell=4)}_{A\to B_1 B_2 B_3} \}.
\end{align*}
Here, $\overline{B}_1=(B_1,B_2)$, $\overline{B}_2=(B_1,B_3)$, $\overline{B}_3=(B_2,B_3)$, and $\overline{B}_4=(B_1,B_2,B_3)$. 
\end{remark}

\begin{remark}[Informed decoder]
Recall that we consider a compound channel model with an informed decoder, i.e., the decoder knows the index $\ell \in \mathcal{J}$ of the channel that is realized (see Remark~\ref{Remark:CSIR}). This assumption is well aligned with the QSS setting.
Indeed, when the participants of a qualified set $T$ choose to cooperate, they know that the effective channel acting on their joint system is $\mathcal{N}_{A \to B_T}^{(T)}$. Consequently, they can apply a decoding map tailored to this specific channel. In this sense, the decoder is effectively informed, since the recovery operation in QSS is explicitly conditioned on the identity of the qualified set, meaning that the decoding procedure depends on $T$.
\end{remark}

\begin{theorem}
\label{Theorem:CQSS_equals_Q}
Let $\mathscr{A}$ be an access structure, and $\mathcal{J}$ be associated with a compound quantum channel as specified in Definition~\ref{Definition:Equivalence}. Then, 
\begin{align}
C_{\text{QSS}}(\mathscr{A})=Q(\mathcal{J}).
\end{align}
\end{theorem}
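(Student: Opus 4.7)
The plan is to prove the equality by a direct reduction in both directions, exploiting the natural bijection that associates each qualified set $T \in \mathscr{A}$ with the index $\ell = T$ of the compound channel, whose corresponding component is the marginal $\mathcal{N}^{(T)}_{A \to B_T} := \mathrm{Tr}_{B_{T^c}} \circ \mathcal{N}_{A \to B_1 \cdots B_K}$. Under this identification, the two coding models share exactly the same primitive data (a single encoder together with one decoder per qualified set), so the achievability criteria will translate essentially verbatim.

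For the inequality $C_{\text{QSS}}(\mathscr{A}) \le Q(\mathcal{J})$, I start with an arbitrary $(2^{n(R-\delta)},n)$ QSS code with encoder $\mathcal{F}_{S\to A^n}$ and decoding family $\{\mathcal{D}^{(T)}_{B_T^n \to \widehat{S}}\}_{T \in \mathscr{A}}$ achieving rate $R$. I reuse the triple as a compound-channel code with informed decoder: since the marginal of $\rho_{S' B_{[K]}^n}$ on $B_T^n$ equals $(\mathcal{N}^{(T)})^{\otimes n}$ applied to $\rho_{S'A^n}$, the decoded state $\rho_{S'\widehat{S}}^{(T)}$ in the compound-channel model coincides with the QSS reduced state $\rho_{S'\widehat{S}}$ (the partial trace over $B_{T^c}^n$ commutes with the decoder acting on $B_T^n$). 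Hence the per-set QSS fidelity bound becomes the minimum-over-$\mathcal{J}$ fidelity bound required by the compound-channel achievability definition, so $R$ is achievable for the compound channel.

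The reverse inequality $Q(\mathcal{J}) \le C_{\text{QSS}}(\mathscr{A})$ is argued symmetrically: given a compound-channel code with encoder $\mathcal{F}$ and decoders $\{\mathcal{D}^{(T)}\}_{T \in \mathscr{A}}$ achieving rate $R$, the same triple is a valid QSS code over $\mathcal{N}_{A \to B_1 \cdots B_K}$ (Definition~\ref{Definition:Quantum_Code_QSS}). For every qualified set $T$ the compound-channel fidelity bound, being a minimum over $\mathcal{J}$, specializes to the QSS fidelity requirement at $T$. Combining the two directions yields $C_{\text{QSS}}(\mathscr{A}) = Q(\mathcal{J})$.

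There is no substantive technical obstacle here — the result is essentially a definitional equivalence. The only care needed is to verify that the compound-channel and QSS achievability definitions impose matching quantifiers: both require the fidelity bound to hold uniformly over arbitrary purifications $\ket{\phi_{S'S}}$ of the secret, and both take a minimum (explicit in one case, implicit in the other) over the qualified sets. Notably, the QSS secrecy requirement against non-qualified subsets is not invoked in the proof of the equality, since the achievability definition used in the paper only imposes the recovery condition on qualified sets; as explained earlier in the paper, secrecy is then an automatic consequence of the no-cloning theorem, so no additional constraint needs to be tracked in the reduction.
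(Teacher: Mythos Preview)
Your proposal is correct and follows essentially the same approach as the paper: both argue that the two coding problems are definitionally equivalent under the identification $\ell \leftrightarrow T$ and $\mathcal{N}^{(T)} = \mathrm{Tr}_{B_{T^c}} \circ \mathcal{N}$, so any code for one is a code for the other with the same fidelity. Your version is more explicit---you spell out both inequalities separately and verify that the quantifiers match---whereas the paper compresses this into a single sentence; you also correctly note that the secrecy condition need not be tracked in the reduction (the paper alludes to it but, as you observe, it is automatic from no-cloning and not part of the achievability definition).
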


\begin{proof}
A code for the compound quantum channel allows every legitimate receiver to
recover the message while the environment gains no information. When applied
to the induced compound channel associated with the access structure, each
qualified set $T$ is represented by a virtual receiver $B_T$ that can decode the
secret, whereas the complementary set $T^c$ obtains no information (see Subsction~\ref{Subsection:Secrecy}). This matches exactly the requirements of a QSS scheme, hence
$C_{\text{QSS}}(\mathscr{A})=Q(\mathcal{J})$.
\end{proof}

Based on Theorem~\ref{Theorem:Bjelakovic} 
due to Bjelakovi\'c et al. \cite{bjelakovic2008quantum}, 
Theorem~\ref{Theorem:Q_equals_E}, and Theorem~\ref{Theorem:CQSS_equals_Q}, we obtain the following corollary.
\begin{corollary} 
The QSS capacity for an access structure $\mathscr{A}$ satisfies
\begin{align}
C_{\text{QSS}}(\mathscr{A}) &\;\ge\; I_c(\mathscr{A}).
\label{eq:QSS-ineg}
\end{align}
Furthermore, 
\begin{align}
C_{\text{QSS}}(\mathscr{A}) &\;=\; \lim_{n\to\infty} \frac{1}{n} I_c(\mathscr{A}^{\otimes n})
\end{align}
\end{corollary}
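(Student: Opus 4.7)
The plan is to combine the three theorems stated earlier in the excerpt by following the chain of identifications between the QSS model, the associated compound quantum channel, and the coherent-information formula. No new technical machinery is required; the corollary follows by threading together results already established.

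First, I would invoke Theorem~\ref{Theorem:CQSS_equals_Q} to replace the QSS capacity by the quantum capacity of the induced compound channel. Concretely, for each qualified set $T \in \mathscr{A}$, I would take the virtual receiver channel $\mathcal{N}^{(T)}_{A \to B_T} := \mathrm{tr}_{B_{T^c}} \circ \mathcal{N}_{A \to B_1 \cdots B_K}$, and assemble the family indexed by $\mathcal{J}$ with $|\mathcal{J}| = |\mathscr{A}|$. Theorem~\ref{Theorem:CQSS_equals_Q} then yields $C_{\text{QSS}}(\mathscr{A}) = Q(\mathcal{J})$. Next, I would apply Theorem~\ref{Theorem:Q_equals_E} to pass from quantum capacity to entanglement-generation capacity, giving $Q(\mathcal{J}) = E(\mathcal{J})$. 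Finally, Theorem~\ref{Theorem:Bjelakovic} supplies both the single-letter lower bound $E(\mathcal{J}) \geq I_c(\mathcal{J})$ and the regularized equality $E(\mathcal{J}) = \lim_{n \to \infty} \frac{1}{n} I_c(\mathcal{J}^{\otimes n})$. Chaining these identifications produces $C_{\text{QSS}}(\mathscr{A}) = E(\mathcal{J}) \geq I_c(\mathcal{J})$ and the analogous regularized formula.

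The one bookkeeping check is that $I_c(\mathcal{J})$ and $I_c(\mathscr{A})$ refer to the same quantity. This follows directly from the identification $\ell \leftrightarrow T_\ell$ between indices in $\mathcal{J}$ and qualified sets in $\mathscr{A}$, under which the output system $B_\ell$ is precisely $B_{T_\ell}$, so the optimization $\max_{\ket{\phi}_{A'A}} \min_{\ell} I(A' \rangle B_\ell)$ in Theorem~\ref{Theorem:Bjelakovic} coincides verbatim with the definition of $I_c(\mathscr{A})$ given in the introduction. The same identification applies to the tensorized channel family $\mathscr{A}^{\otimes n}$, delivering the regularized equality.

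There is no substantial obstacle here: the proof is essentially a concatenation of Theorem~\ref{Theorem:CQSS_equals_Q}, Theorem~\ref{Theorem:Q_equals_E}, and Theorem~\ref{Theorem:Bjelakovic}, with care taken only to reconcile notation between the channel-indexed and access-structure-indexed formulations of $I_c$. The genuine content of the corollary — reliability of qualified-set decoding, secrecy against non-qualified sets via the no-cloning theorem, and the single-letter coherent-information achievability — has already been absorbed into the preceding theorems, so the corollary itself reduces to a direct application of the chain of equalities.
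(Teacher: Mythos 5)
Your proposal is correct and follows essentially the same route as the paper, which likewise obtains the corollary by chaining Theorem~\ref{Theorem:CQSS_equals_Q}, Theorem~\ref{Theorem:Q_equals_E}, and Theorem~\ref{Theorem:Bjelakovic}. Your explicit construction of the virtual-receiver channels via $\mathrm{tr}_{B_{T^c}}$ and the bookkeeping check that $I_c(\mathcal{J})=I_c(\mathscr{A})$ are welcome additions but do not change the argument.
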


If the degradedness relation is a total order over $\mathscr{A}$ and
every channel $\mathcal{N}^{(\ell)}$ is degradable, for $\ell\in\mathscr{A}$, then  
$C_{\text{QSS}}(\mathscr{A}) = I_c(\mathscr{A})$ \cite{smith2008private}. 
An example is given below.

\begin{example}[Dephasing Channel on a (2,3) t-QSS]
Consider a (2,3) t-QSS scheme with three players.
As explained in Remark~\ref{Remark:2_3_QSS}, the associated access structure can be written as
$\mathscr{A}=\{\mathtt{ab},\mathtt{ac},\mathtt{bc},\mathtt{abc}\}$, where $\mathtt{a}$, $\mathtt{b}$, and $\mathtt{c}$ stand for  Alice, Bob, and Charlie.

Let $\mathcal{N}_{A\to B_\ell}^{(\ell)}$ be a dephasing channel,
\begin{align}
\mathcal{N}_{A\to B_\ell}^{(\ell)}(\rho)
= [1-q(\ell)]\,\rho + q(\ell)\, Z\rho Z^\dagger ,
\end{align}
where $Z$ is the Pauli phase-flip operator and $q(\ell)\in[0,1]$ is the dephasing parameter, for 
$\ell\in\mathscr{A}$. %

Therefore, Theorem~\ref{Theorem:CQSS_equals_Q} gives us 
\begin{align}
C_{\text{QSS}}(\mathscr{A})
&= \log 3 - \max_{q \in \{q(\mathtt{ab}), q(\mathtt{ac}), q(\mathtt{bc}), q(\mathtt{abc})\}} H_2(q).
\end{align}
Achievability is obtained by taking the maximally entangled input state, $\ket{\phi_{A'A}}=\frac{1}{\sqrt3}\sum_{i=0}^2 \ket{i}\otimes \ket{i}$. The converse proof follows as in \cite[Ex. 24.7.1]{wilde2013quantum}. 
\end{example}

\section{Proof of Theorem~\ref{Theorem:Q_equals_E}}
\label{Section:Q_equals_E}

We already know that $E(\mathcal{J}) \geq Q(\mathcal{J})$, as pointed out in Remark~\ref{Remark:Entanglement_Generation}. 
It remains to show that $E(\mathcal{J}) \leq Q(\mathcal{J})$.
By the teleportation protocol, generating entanglement at rate $R$ allows quantum communication at the same rate, provided that classical communication is available. Thus, it remains to show that classical assistance is unnecessary for transmitting quantum information over a compound quantum channel with an informed decoder.

Consider a code that transmits quantum information at rate $R$ with classical
communication assistance. We aim to show that the same rate is achievable
without this assistance. Suppose that Alice’s encoder is a quantum instrument of
the following form:
\begin{align}
    \mathcal{F}(\phi) = \sum_m   \mathcal{F}_m(\phi) \otimes \ketbra{m},
\end{align}
with $\phi\equiv \ketbra{\phi}_{S'S}$,
where %
$m$ is the classical message that Alice sends to Bob using the classical assistance.  
The corresponding output state is then
$%
    \sum_m \mathcal{D}_m^{(\ell)}\circ\mathcal{N}^{(\ell)\otimes n}\circ \mathcal{F}_m(\phi)
$, %
where $\mathcal{N}^{(\ell)}$ is the actual channel.

By hypothesis, reliable quantum communication at rate $R$ with classical assistance implies the existence of a sequence of encoders and decoders $\{\mathcal{F}_{m},\, \mathcal{D}_{m}^{(\ell)}\}$ and error parameters $\varepsilon_n$ such that
\begin{align}
1-\varepsilon_n \leq \min_\ell F\!\left(
\sum_m \mathcal{D}_m^{(\ell)}\circ \mathcal{N}^{(\ell)\otimes n}\circ \mathcal{F}_m(\phi) ,
\, \ketbra{\phi}
\right)
\end{align}
where $\varepsilon_n$ tends to zero as $n\to\infty$.
It follows that 
\begin{align}
&1-\varepsilon_n
\nonumber\\
&\leq \frac{1}{\abs{\mathcal{J}}}\sum_\ell F\!\left(
\sum_m \mathcal{D}_m^{(\ell)}\circ \mathcal{N}^{(\ell)\otimes n}\circ \mathcal{F}_m(\phi) ,
\, \ketbra{\phi}
\right)
\nonumber\\
&=  \frac{1}{\abs{\mathcal{J}}}\sum_\ell \langle \phi |\left(
\sum_m \mathcal{D}_m^{(\ell)}\circ \mathcal{N}^{(\ell)\otimes n}\circ \mathcal{F}_m(\phi)\right)
\ket{\phi}
\nonumber\\
&= \sum_m p_M(m)
\left[  \frac{1}{\abs{\mathcal{J}}}\sum_\ell
\bra{ \phi}\,
\mathcal{D}_m^{(\ell)}\circ \mathcal{N}^{(\ell)\otimes n}\circ \mathcal{F}_m'(\phi)
\ket{\phi} \right]
\end{align}
where $\mathcal{F}'_m$ is a rescaling of %
$\mathcal{F}_m$, and $p_M(m)$ is the problability that Alice has sent the classical message $m$.
Then, there exists $m^*$ such that
\begin{align}
\frac{1}{\abs{\mathcal{J}}}\sum_\ell \abs{ 
1-\bra{\phi}\,
\mathcal{D}_{m^*}^{(\ell)}\circ \mathcal{N}^{(\ell)\otimes n}\circ \mathcal{F}'_{m^*}(\phi)
\ket{\phi}
}
&\le  \varepsilon_n.
\end{align}
It follows that
\begin{align} 
1-F\left(
\mathcal{D}_{m^*}^{(\ell)}\circ 
    \mathcal{N}^{(\ell)\otimes n}\circ 
        \mathcal{F}'_{m^*}(\phi) 
\,,\;
\ketbra{\phi}
\right) 
&\le  \abs{\mathcal{J}} \varepsilon_n
\end{align}
for all $\ell$. That is,
\begin{align*} 
\min_\ell F\left(
\mathcal{D}_{m^*}^{(\ell)}\circ 
    \mathcal{N}^{(\ell)\otimes n}\circ 
        \mathcal{F}'_{m^*}(\phi) 
\,,\;
\ketbra{\phi}
\right)
&\ge  1-\abs{\mathcal{J}} \varepsilon_n.
\end{align*}

Having assumed that $\abs{\mathcal{J}}$ is finite,
this %
tends to $1$ as 
$n\to \infty$.

Consequently, Alice and Bob can  use the encoding-decoding pair $(\mathcal{F}'_{m^*}, \mathcal{D}^{(\ell)}_{m^*})$, %
where  $m^*$ is fixed. %
This shows that the forward classical communication of the classical message $m$ is not needed and  the same performance can be achieved without such classical assistance.
\qed %

\section{Conclusion}
\label{Section:Conclusion}

In this work, we introduced an information-theoretic framework for quantum secret sharing over noisy broadcast channels and defined the corresponding QSS capacity. By translating the QSS access structure to a compound quantum channel with an informed decoder, we showed that the QSS capacity equals the quantum capacity of the induced compound channel, leading to a regularized coherent-information characterization, which fundamentally differs from the classical secret sharing characterization. A key feature of the model is that secrecy follows directly from the no-cloning theorem once reliable reconstruction is ensured.
Future research directions include the design of explicit and efficient %
QSS schemes approaching capacity, extension of our approach to other cryptographic primitives such as bit commitment \cite{hayashi2022commitment}, oblivious transfer \cite{yang2025quantum} and authentication \cite{farre2025entanglement}, %
continuous-variable settings, and quantum sharing of classical secrets \cite{hillery1999quantum}.

\section*{Acknowledgment}
This work was supported by  ISF, %
 Grants n. 939/23 and 2691/23,
 DIP %
 n. 2032991, Ollendorff-Minerva Center %
 n. 86160946, and  %
 HD Quantum Center %
 n. 	2033613.

\bibliographystyle{IEEEtran}
{\balance
\bibliography{references}
}

\end{document}